\newcommand{\Z}{ \mathbb{Z} }
\newcommand{\N}{ \mathbb{N} }
\DeclareMathOperator{\theory}{FO}
\DeclareMathOperator{\factoreq}{factoreq}
\DeclareMathOperator{\shift}{shift} \DeclareMathOperator{\conj}{conj} 
\DeclareMathOperator{\lessthan}{lessthan}
\DeclareMathOperator{\lessthaneq}{lessthaneq}
\DeclareMathOperator{\allconj}{allconj}
\DeclareMathOperator{\lexleast}{lexleast}
\DeclareMathOperator{\lie}{lie}
\theoremstyle{plain}
	\newtheorem{theorem}{Theorem}
	\newtheorem{notation}[theorem]{Notation}
		\numberwithin{theorem}{section}
	\newtheorem{lemma}[theorem]{Lemma}
	\newtheorem{proposition}[theorem]{Proposition}
	\newtheorem{corollary}[theorem]{Corollary}
\newtheorem{question}[theorem]{Question}
	\newtheorem*{theorem*}{Theorem}
	\newtheorem*{lemma*}{Lemma}
	\newtheorem*{prop*}{Proposition}
	\newtheorem*{cor*}{Corollary}
	\newtheorem*{conj*}{Conjecture}
\theoremstyle{definition}
	\newtheorem{example}[theorem]{Example}
	\newtheorem*{example*}{Example}
	\newtheorem{remark}[theorem]{Remark}
\begin{document}

%\title{Rational orbits with values in a group of $S$-units}

\title{Lie complexity of words}

\author{Jason P. Bell}
\address{Department of Pure Mathematics\\
University of Waterloo\\
Waterloo, ON N2L 3G1\\
Canada}
\email{jpbell@uwaterloo.ca}
\thanks{Jason Bell is supported by NSERC grant 2016-03632. Jeffrey Shallit is supported by NSERC grant 2018-04118.}

\author{Jeffrey Shallit}
\address{School of Computer Science\\
University of Waterloo\\
Waterloo, ON N2L 3G1\\
Canada}
\email{shallit@uwaterloo.ca}

\begin{abstract} 
Given a finite alphabet $\Sigma$ and a right-infinite word $\bf w$ over $\Sigma$,  we define the Lie  complexity function $L_{\bf w}:\mathbb{N}\to \mathbb{N}$, whose value at $n$ is the number of conjugacy classes (under cyclic shift) of length-$n$ factors $x$ of $\bf w$ with the property that every element of the conjugacy class appears in $\bf w$. 

We show that the Lie complexity function is uniformly bounded for words with linear factor complexity, and as a result we show that words of linear factor complexity have at most finitely many primitive factors $y$ with the property that $y^n$ is again a factor for every $n$.

We then look at automatic sequences and show that the Lie complexity function of a $k$-automatic sequence is again $k$-automatic. \end{abstract}
\keywords{Combinatorics on words, automatic sequences, morphic words, linear factor complexity, Lie complexity.}
\subjclass[2020]{68R15, 11B85}

\maketitle

\tableofcontents

\section{Introduction}
Let $\Sigma$ be a finite alphabet and let ${\bf w}$ be a right-infinite word over $\Sigma$.  The {\it factor complexity function\/} $p_{\bf w}:\mathbb{N}\to \mathbb{N}$, which counts the number of factors of ${\bf w}$ of each length, plays a fundamental role in understanding the behaviour of ${\bf w}$ as a word; see, e.g., \cite{Cassaigne:1996,Mignosi:1989}.   (It is also called the {\it subword complexity function}.)   Often, however, one wishes to understand factors of ${\bf w}$ of a special form (e.g., palindromes, bordered, unbordered, squarefree, repetition-free, $k$-power, etc.).  To accomplish this task, one requires the use of finer invariants, which are designed to count factors of a certain form.  Of course, it is generally a very difficult problem to exactly count factors of a specific form in a given word, and so in practice one settles for invariants that are easier to compute, which give upper and lower bounds for the desired quantities.

In this paper, we look at the problem of counting factors $y$ of a right-infinite word ${\bf w}$ with the property that all cyclic shifts of $y$ remain factors of ${\bf w}$.  In particular, this includes factors $y$ with \emph{unbounded exponent} (that is, factors $y$ of ${\bf w}$ with the property that $y^n$ is again a factor of ${\bf w}$ for every $n\ge 1$), and so a consequence of our work is that we are able to give upper bounds on the number of such factors. In the case of counting factors $y$ of unbounded exponent, we may restrict our attention to the case when $y$ is itself not a perfect power; that is, when $y$ is \emph{primitive}. In this case, it is known that the set of primitive factors $y$ of ${\bf w}$ having unbounded exponent is a finite set when $\bf w$ is pure morphic (fixed point of a morphism).

Since automatic words have linear factor complexity (that is, the number of factors of length $n$ is bounded by a fixed affine function $An+B$ in $n$ for every $n$), it is natural to ask whether a similar phenomenon holds more generally for words of linear factor complexity.  To accomplish this, we introduce a new complexity function, the \emph{Lie complexity function}, which is motivated by ideas from the theory of Lie algebras.   Given a right-infinite word ${\bf w}$, we define its Lie complexity function, $L_{\bf w}:\mathbb{N}\to \mathbb{N}$, to be the map in which $L_{\bf w}(n)$ is equal to the number of equivalence classes $[y]$ of length-$n$ factors of ${\bf w}$ with the property that every cyclic permutation of $y$ is again a factor of ${\bf w}$.  Our main theorem is the following estimate.
\begin{theorem}
Let $\Sigma$ be a finite alphabet, let ${\bf w}$ be a right-infinite word over $\Sigma$, and let $L_{\bf w}:\mathbb{N}\to \mathbb{N}$ be the Lie complexity function of ${\bf w}$.  Then for each $n\ge 1$ we have
$$L_{\bf w}(n)\le p_{\bf w}(n)-p_{\bf w}(n-1)+1.$$  In particular, if ${\bf w}$ has linear factor complexity, then $L_{\bf w}(n)$ is uniformly bounded above by a constant.
\label{thm:main1}
\end{theorem}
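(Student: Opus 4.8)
The plan is to recast the problem in terms of the Rauzy graph $G_{n-1}$ of $\mathbf{w}$: its vertices are the length-$(n-1)$ factors of $\mathbf{w}$ and its directed edges are the length-$n$ factors, where the edge $a_1\cdots a_n$ runs from $a_1\cdots a_{n-1}$ to $a_2\cdots a_n$. By construction $G_{n-1}$ has exactly $p_{\mathbf{w}}(n-1)$ vertices and $p_{\mathbf{w}}(n)$ edges, so the right-hand side of the desired inequality is precisely the cyclomatic number $|E|-|V|+1$ of a connected graph with these parameters. The guiding idea is that a conjugacy class all of whose rotations are factors is exactly a closed ``necklace'' in this graph, so that the Lie complexity becomes a count of certain cycles that the cyclomatic number controls.

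First I would show that each full conjugacy class $[y]$ of length $n$ gives a simple directed cycle in $G_{n-1}$: if we list the distinct rotations of $y$, the length-$(n-1)$ suffix of one rotation is the length-$(n-1)$ prefix of the next, so the corresponding edges close up into a cycle, and since $[y]$ is full all these edges genuinely occur in $G_{n-1}$. One must check the visited vertices are pairwise distinct; writing $y=z^k$ with $z$ primitive of length $m$, the cycle has length $m$, and the distinctness of the $m$ prefixes follows from a Fine--Wilf/periodicity argument, since two equal length-$(n-1)$ prefixes would force the necklace of $y$ to have period properly dividing $m$, contradicting primitivity of $z$.

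The crucial observation is then an edge-disjointness statement: an edge of $G_{n-1}$ is a single length-$n$ factor $x$, and it lies on the necklace of $[y]$ only when $x$ is a rotation of $y$, i.e.\ only when $[x]=[y]$. Hence distinct full classes produce edge-disjoint cycles, and edge-disjoint cycles are linearly independent in the cycle space of $G_{n-1}$ over $\mathbb{F}_2$; consequently $L_{\mathbf{w}}(n)$ is at most the dimension of that cycle space. To pin this dimension down I would note that $G_{n-1}$ is connected: reading the right-infinite word $\mathbf{w}$ letter by letter traces an infinite walk that traverses every edge (every length-$n$ factor occurs in $\mathbf{w}$) and meets every vertex. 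For a connected graph the cycle space has dimension $|E|-|V|+1$, which yields
$$L_{\mathbf{w}}(n)\le p_{\mathbf{w}}(n)-p_{\mathbf{w}}(n-1)+1.$$
For the final assertion I would invoke the known fact that a word has linear factor complexity if and only if the first difference $p_{\mathbf{w}}(n)-p_{\mathbf{w}}(n-1)$ is bounded (see \cite{Cassaigne:1996}); combined with the displayed inequality this bounds $L_{\mathbf{w}}(n)$ by a constant.

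I expect the main obstacle to be the bookkeeping in the first two steps rather than the counting: verifying that each necklace really is a \emph{simple} cycle (the vertex-distinctness, especially in the non-primitive case $k\ge 2$), and making the edge-disjointness argument airtight so that linear independence in the cycle space is clean. A secondary subtlety is the ``in particular'' clause, where the genuinely nontrivial ingredient is the equivalence between linear complexity and bounded first difference: the mere bound $p_{\mathbf{w}}(n)\le An+B$ does not by itself control the individual differences, so that structural result, not just linearity, is what is needed.
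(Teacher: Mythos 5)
Your proof is correct, but it takes a genuinely different route from the paper. You work combinatorially in the Rauzy graph $G_{n-1}$: each full conjugacy class yields a closed walk whose edge set is exactly that class, distinct classes give edge-disjoint elements of the $\mathbb{F}_2$-cycle space, and connectivity of $G_{n-1}$ (witnessed by the infinite walk traced by $\mathbf{w}$) pins the cycle space dimension at $p_{\bf w}(n)-p_{\bf w}(n-1)+1$. The paper instead works in the factor algebra $A_{\bf w}$ over $\mathbb{Q}$: it first proves the exact identity $L_{\bf w}(n)=\dim V_n-\dim W_n$, where $W_n$ is spanned by commutators $ab-ba$ (Lemma \ref{lem:codim}), and then bounds $\dim W_{n+1}\ge \dim V_n-1$ by showing the kernel of $u\mapsto ux-xu$ (with $x$ the sum of the letters) is spanned by $x^n$ --- a computation that is essentially the algebraic avatar of your connectivity argument. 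Your version is more elementary and self-contained for this theorem and yields only the inequality you need, whereas the paper's machinery buys an exact formula for $L_{\bf w}(n)$ and explains the ``Lie'' terminology. Two small remarks on your write-up: the vertex-distinctness you worry about has a one-line proof (all rotations of $y$ share the same letter multiset, so a length-$(n-1)$ prefix determines the rotation), and in fact you can bypass simplicity entirely, since a closed walk traversing each of its edges exactly once already has all vertex degrees even and hence lies in the cycle space; also, for $n=1$ the graph is a single vertex with loops, so one should use the multigraph form of the cyclomatic number. Your caveat about the ``in particular'' clause is exactly right: the needed input is Cassaigne's theorem that linear complexity forces bounded first differences, which is what the paper cites as well.
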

Observe that if ${\bf w}$ is a right-infinite  word and $y$ is a primitive word such that $y^n$ is a factor of ${\bf w}$ for every $n$, then for every $n$, all cyclic permutations of $y^n$ are necessarily factors of ${\bf w}$.  Using this observation, we are able to prove the following result.
\begin{theorem}
Let $\Sigma$ be a finite alphabet and let ${\bf w}$ be a right-infinite word over $\Sigma$.  If ${\bf w}$ has linear factor complexity, then the set of primitive factors $y$ of ${\bf w}$ such that $y^n$ is a factor of ${\bf w}$ for every $n$ is a finite set.   
\label{thm:main2}
\end{theorem}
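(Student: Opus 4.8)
The plan is to reduce Theorem~\ref{thm:main2} to Theorem~\ref{thm:main1} by exploiting the observation already recorded in the excerpt: if $y$ is primitive and $y^n$ is a factor of ${\bf w}$ for every $n$, then for each fixed $n$ all cyclic permutations of $y^n$ are factors of ${\bf w}$. Thus each such $y$ contributes, at every length that is a multiple of $|y|$, a conjugacy class counted by the Lie complexity function $L_{\bf w}$. Since Theorem~\ref{thm:main1} guarantees that $L_{\bf w}$ is bounded above by a constant $C$ when ${\bf w}$ has linear factor complexity, the number of such conjugacy classes at any fixed length is at most $C$. The strategy is to turn this uniform bound into a finiteness statement about the set of primitive factors of unbounded exponent.

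First I would fix the constant $C$ with $L_{\bf w}(n)\le C$ for all $n$, as supplied by Theorem~\ref{thm:main1} and the linear factor complexity hypothesis. Next I would suppose, toward a contradiction, that there are infinitely many distinct primitive factors $y_1,y_2,\ldots$ of ${\bf w}$, each of unbounded exponent. The central point is that distinct \emph{primitive} words cannot be cyclic conjugates of arbitrarily high powers of one another: more precisely, for primitive $y$ the powers $y^n$ is itself primitive only for $n=1$, but the conjugacy class of $y^n$ is determined by the conjugacy class of $y$, and two primitive words $y,y'$ have $y^n$ conjugate to $(y')^m$ only when $|y|=|y'|$, $n=m$, and $y'$ is a cyclic shift of $y$. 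I would make this precise using the standard fact that a power $z^k$ determines its primitive root $z$ up to nothing (the root is unique) and that conjugate words have conjugate primitive roots. Consequently, if I choose a common length $N$ that is a multiple of $|y_i|$ for several of the $y_i$, the words $y_i^{N/|y_i|}$ lie in \emph{distinct} conjugacy classes, each of which is counted by $L_{\bf w}(N)$.

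The key quantitative step is then to exhibit, for a suitable single length $N$, more than $C$ pairwise non-conjugate length-$N$ factors all of whose cyclic shifts occur in ${\bf w}$, contradicting $L_{\bf w}(N)\le C$. If the infinitely many $y_i$ had bounded length, then infinitely many would share a common length $\ell$, and already at $N=\ell$ we would get infinitely many pairwise non-conjugate classes, an immediate contradiction. So I may assume the lengths $|y_i|$ are unbounded; here I would pass to $C+1$ of the $y_i$, take $N$ to be the least common multiple of their lengths, and observe that the $C+1$ words $y_i^{N/|y_i|}$ are length-$N$ factors, pairwise non-conjugate by the primitivity argument above, and each has all cyclic shifts appearing in ${\bf w}$ (since each $y_i$ has unbounded exponent, so in particular $y_i^{N/|y_i|}$ and all its rotations occur). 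This yields $L_{\bf w}(N)\ge C+1$, the desired contradiction.

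The main obstacle I anticipate is the combinatorial bookkeeping in the primitivity step: I must argue carefully that passing from $y_i$ to the high power $y_i^{N/|y_i|}$ does not accidentally collapse two distinct primitive roots into the same conjugacy class. This rests on the uniqueness of the primitive root of a word together with the fact that cyclic conjugacy is compatible with taking primitive roots, so that the conjugacy class of $y_i^{N/|y_i|}$ remembers the conjugacy class of $y_i$; distinct primitive $y_i$ that are not cyclic shifts of one another therefore give genuinely distinct classes. Once this is established the counting is immediate, and the linear-complexity hypothesis enters only through the uniform bound of Theorem~\ref{thm:main1}.
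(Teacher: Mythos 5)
Your proposal is correct and follows essentially the same route as the paper: the paper also combines the uniform bound from Theorem~\ref{thm:main1} with a counting argument at a common-multiple length (its Proposition~\ref{prop:kappa}), exhibiting more than the bound's worth of cyclically inequivalent powers all of whose conjugates are factors. The only cosmetic difference is that the paper phrases the intermediate step in terms of classes of periodic infinite words $u^{\omega}$ and certifies inequivalence by choosing an exponent $D$ with $u_i^D$ not a factor of $u_j^{\omega}$, whereas you certify it via uniqueness of primitive roots under conjugacy; both are standard and sound.
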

We point out that an analogue of Theorem \ref{thm:main2} was already known to hold for pure morphic sequences \cite[Corollary 20]{KS}.
We are also able to show that the condition that $\limsup_{n \rightarrow \infty} p_{\bf w}(n)/n$ be finite in Theorem \ref{thm:main2} cannot be relaxed.
\begin{theorem} Let $f:\mathbb{N}\to \mathbb{N}$ be a function that tends to infinity as $n\to \infty$ and let $\Sigma$ be a finite alphabet.  Then there is a right-infinite recurrent word ${\bf w}$ over $\Sigma$ such that $p_{\bf w}(n)\le n f(n)$ for $n$ sufficiently large such that ${\bf w}$ has infinitely many distinct primitive factors $y$ with the property that $y^n$ is a factor of ${\bf w}$ for every $n$.
\label{thm:main3}
\end{theorem}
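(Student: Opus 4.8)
The plan is to construct ${\bf w}$ explicitly over a two-letter sub-alphabet $\{0,1\}\subseteq\Sigma$ (the case $|\Sigma|=1$ being vacuous, as then ${\bf w}$ has a single factor of each length). The infinitely many primitive factors of unbounded exponent will be taken from the family $y_s=0^s1$, $s\ge 1$; each $y_s$ is primitive since it contains exactly one letter $1$. The aim is to make $(0^s1)^n$ a factor for every $n$ and for infinitely many $s$, while keeping the factor complexity just below the prescribed bound $nf(n)$. Since $f\to\infty$ we are permitted complexity that is superlinear, and by the contrapositive of Theorem \ref{thm:main2} such superlinearity is in fact forced.

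First I would fix a sufficiently sparse set of periods $S=\{s_1<s_2<\cdots\}\subseteq\N$, to be calibrated at the end in terms of $f$, together with exponents $M_1<M_2<\cdots$ with $M_t\to\infty$, and then form ${\bf w}$ as an infinite concatenation of blocks of the shape $(0^{s}1)^{M}$. To guarantee recurrence I would use a schedule that is periodic in structure: in epoch $t$ write the blocks with periods $s_1,s_2,\dots,s_t,s_{t-1},\dots,s_1$ (an ``up-and-down'' sweep), all with exponent $M_t$, and then pass to epoch $t+1$. With this schedule every period $s_j$ reappears in every epoch $t\ge j$, carrying the exponent $M_t\to\infty$, so $(0^{s_j}1)^n$ occurs for every $n$; hence each $y_{s_j}$ has unbounded exponent and there are infinitely many of them. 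Moreover every factor of ${\bf w}$ lies either in the interior of one periodic block or straddles one of the finitely many boundary types $(s_j,s_{j\pm1})$ occurring within a single epoch, and each such configuration recurs in every later epoch; thus ${\bf w}$ is recurrent.

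The heart of the argument is the complexity bound, which I would obtain by classifying each length-$\ell$ factor as either \emph{interior} to a single block or \emph{straddling} a block boundary. An interior factor of a block of period $s+1$ is a factor of the purely periodic word $(0^s1)^\omega$, of which there are at most $s+1$ of each length; taking the union over the periods present, and using that every period $s\ge\ell$ contributes only the $\ell+1$ factors having at most one $1$, the number of interior factors of length $\ell$ is at most $(\ell+1)+\sum_{s\in S,\ s<\ell}(s+1)$. A straddling factor is determined by the left period, the right period, and the position of the boundary; crucially, a boundary between two periods that are both $\ge\ell$ produces only factors of the form $0^{a}10^{\ell-1-a}$, which are already counted, so no new factors arise there. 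Consequently the straddling factors of length $\ell$ number at most $C\ell\cdot|\{s\in S:s<\ell\}|$ for an absolute constant $C$, the relevant boundaries being those incident to a period $s<\ell$, of which the schedule contains only $O(|\{s\in S:s<\ell\}|)$. Equivalently, one checks that the only right-special factors of length $\ell$ are $0^{\ell}$ together with $O(|\{s\in S:s<\ell\}|)$ factors supported near such boundaries, which bounds $p_{\bf w}(\ell)-p_{\bf w}(\ell-1)$ and hence $p_{\bf w}(\ell)$.

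Combining the two estimates gives $p_{\bf w}(\ell)\le C'\,\ell\,|\{s\in S:s\le\ell\}|$ for all large $\ell$. The final step is to choose $S$ so sparse that $C'\,|\{s\in S:s\le\ell\}|\le f(\ell)$ eventually: since $f\to\infty$, I would pick $s_{j+1}$ large enough that $f(\ell)\ge C'(j+1)$ for every $\ell\ge s_{j+1}$, so that the $(j+1)$-st period only ``switches on'' once the budget $nf(n)$ can afford it, yielding $p_{\bf w}(n)\le nf(n)$ for $n$ sufficiently large. I expect the main obstacle to be precisely this complexity estimate: one must control the straddling (right-special) factors uniformly across all length scales—in particular at scales where some early blocks are shorter than the window, so that a single window may cross several boundaries—and then calibrate the sparsity of $S$ against an arbitrarily slowly growing $f$. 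The recurrence requirement is a secondary constraint, since the schedule must revisit every period with growing exponent while keeping the number of distinct boundary types at each scale small enough not to violate the bound.
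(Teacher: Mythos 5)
Your route is genuinely different from the paper's (which takes high powers of longer and longer prefixes of the Fibonacci word and glues them together with a self-embedding recursion $s_n=s_{n-1}v_ns_{n-1}v_n$), and the choice of the explicitly primitive blocks $0^s1$ is a real simplification on the ``infinitely many distinct primitive $y$'' front. However, there is a genuine gap: the word you construct is not recurrent, and recurrence is part of the statement. Your recurrence argument rests on the claim that every factor is either interior to a single periodic block or straddles a single boundary; this is false for factors long enough to contain an entire block. Concretely, the peak block of epoch $t$ sits inside $\cdots(0^{s_{t-1}}1)^{M_t}(0^{s_t}1)^{M_t}(0^{s_{t-1}}1)^{M_t}\cdots$, so the word contains the factor
$$H := 1\,0^{s_{t-1}}1\,(0^{s_t}1)^{M_t}\,0^{s_{t-1}}1 .$$
Since $s_{t-1}<s_t$, the prefix $10^{s_{t-1}}1$ shows the run of $0^{s_t}1$'s cannot extend to the left (the length-$(s_{t-1}+2)$ suffix of $0^{s_t}1$ is $0^{s_{t-1}+1}1\neq 10^{s_{t-1}}1$), and the suffix $0^{s_{t-1}}1$ is not a prefix of $0^{s_t}1$, so the run cannot extend to the right. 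Hence any occurrence of $H$ pins down a maximal run of $0^{s_t}1$ of exponent exactly $M_t$; but every later maximal run of $0^{s_t}1$ has exponent $M_{t'}>M_t$, so $H$ occurs only around the one peak block of epoch $t$. The same problem infects the interior blocks as well. (Your check that single-boundary factors recur is fine; it is exactly the two-sided factors that break.)

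This is precisely what the paper's doubling recursion is designed to solve: writing $s_n=s_{n-1}v_ns_{n-1}v_n$ forces every factor of $s_{n-1}v_n$ to occur at least twice in $s_n$, hence infinitely often in the limit, with no case analysis on block structure at all. To repair your construction you would need an analogous self-embedding device (e.g., after finishing epoch $t$, rewrite the entire prefix constructed so far before starting epoch $t+1$), and then redo the complexity estimate for the modified word. That estimate also needs more care than you give it at lengths $\ell$ exceeding the block lengths of early epochs, where a single window can contain many blocks and boundaries; there the contribution is controlled not by thinning $S$ but by choosing $M_t$ large relative to $s_t$ and $f$ (this is the role of the paper's cut-offs $m_j$ with $f(m_j)\le 19j^2$ and the lower bounds $|u_i|>2^{m_i}|u_{i-1}|$). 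Your final calibration idea --- switching on the period $s_{j+1}$ only once $f$ can afford it --- is sound and matches the spirit of the paper's argument, but as written the construction does not yet prove the theorem.
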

We next turn our attention to automatic words ${\bf w}$.  If $k\ge 2$ is a positive integer, $\Delta$ is a finite set, and $f:\mathbb{N}\to \Delta$ is a $k$-automatic sequence, then we can identify $f$ with the right-infinite word ${\bf w}:=f(0)f(1)f(2)\cdots $ over the alphabet $\Delta$.  Thus it makes sense to talk about the Lie complexity function of the automatic sequence $f$, by making this identification with the word ${\bf w}$.  Our next result shows that the Lie complexity functions of automatic words are particularly well-behaved.

\begin{theorem}
Let $k\ge 2$ be a positive integer, let $\Delta$ be a finite set, and let $f:\mathbb{N}\to \Delta$ be a $k$-automatic sequence.  Then the Lie complexity function of $f$ is again a $k$-automatic sequence. 
\label{thm:main5}
\end{theorem}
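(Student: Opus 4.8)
The plan is to realize $L_{\bf w}$ as a counting function of a first-order definable predicate in the Büchi--Bruyère structure $\langle \N, +, V_k, f\rangle$, and then to pass from logic, through $k$-regularity, to $k$-automaticity. Recall that a relation $R\subseteq \N^r$ is $k$-recognizable if and only if it is definable by an $\theory$-formula in $\langle \N,+,V_k\rangle$, and that since $f$ is $k$-automatic we may freely use the atomic predicates $f[i]=d$ for $d\in\Delta$. First I would assemble the basic predicates. Writing $\factoreq(i,j,n)$ for the statement that the length-$n$ factors beginning at positions $i$ and $j$ coincide (namely $\forall t\,(t<n \to f[i+t]=f[j+t])$), I would express that the factor at $j$ is the cyclic shift by $s$ of the factor at $i$ by splitting into the two index ranges $t<n-s$ and $t\ge n-s$; combining these gives $\allconj(i,n)$, the assertion that \emph{every} cyclic permutation of the length-$n$ factor at $i$ occurs somewhere in ${\bf w}$. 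Each of these is a bounded boolean combination of $+$, $<$, and the letter predicates of $f$, hence $\theory$-definable in the structure above.

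The counting device is to select in each conjugacy class its lexicographically least word as canonical representative. Let $\lexleast(i,n)$ say that the length-$n$ factor at $i$ is lexicographically $\le$ each of its cyclic permutations, and let $\mathrm{novel}(i,n)$ abbreviate $\forall j\,(j<i \to \neg\factoreq(i,j,n))$, i.e.\ that $i$ is the first occurrence of its factor. Since the property $\allconj$ depends only on the conjugacy class (the conjugates of a conjugate of $x$ form the same set), the positions $i$ satisfying
$$\psi(i,n):=\allconj(i,n)\wedge\lexleast(i,n)\wedge\mathrm{novel}(i,n)$$
are in bijection with the conjugacy classes of length-$n$ factors all of whose cyclic permutations occur: each such class has a unique lex-least word, that word occurs (it is itself a cyclic permutation of a good factor), and it has a unique first occurrence. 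Hence
$$L_{\bf w}(n)=\#\{\, i\in\N : \psi(i,n)\,\}.$$

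Finally I would invoke the transfer theorems. For each fixed $n$ the number of solutions $i$ is finite, being at most $p_{\bf w}(n)$, so the enumeration results of Charlier, Rampersad, and Shallit show that the solution-counting function $\bigl(L_{\bf w}(n)\bigr)_{n}$ is $k$-regular. By Cobham's theorem a $k$-automatic sequence has linear factor complexity, so Theorem~\ref{thm:main1} shows $L_{\bf w}$ is bounded, and thus takes only finitely many values. A $k$-regular sequence with finite range is $k$-automatic, by the theorem of Allouche and Shallit on $k$-regular sequences, and we conclude that $L_{\bf w}$ is $k$-automatic.

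The genuine obstacle is the faithful $\theory$-encoding of the cyclic-shift and lexicographic relations together with the verification that the canonical-representative construction counts each good conjugacy class exactly once; in particular one must check the bijection $\{\text{good classes}\}\leftrightarrow\{\text{lex-least words}\}\leftrightarrow\{\text{first occurrences}\}$ carefully, since it is this that lets us count classes rather than factors. Everything past the predicate engineering is an application of established machinery, the one external input being the boundedness of $L_{\bf w}$ furnished by Theorem~\ref{thm:main1}, which is exactly what upgrades $k$-regularity to $k$-automaticity.
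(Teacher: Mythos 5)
Your proposal is correct and follows essentially the same route as the paper: define a first-order predicate selecting, for each ``good'' conjugacy class, the first occurrence of its lexicographically least representative, count solutions via the Charlier--Rampersad--Shallit enumeration theorem to get $k$-regularity, and upgrade to $k$-automaticity using the boundedness of $L_{\bf w}$ from Theorem~\ref{thm:main1}. The only differences are cosmetic (your \emph{novel} is the paper's first-occurrence clause, and the paper additionally spells out the shift/conjugacy formulas explicitly, with a small tweak to handle the empty word).
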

The outline of this paper is as follows.  In \S\ref{sec:Lie} we formally introduce the Lie complexity function and in \S\ref{sec:alg} we give an algebraic interpretation of this complexity function.  In \S\ref{sec:proof} we use the algebraic theory we develop to prove Theorems \ref{thm:main1} and \ref{thm:main2}.  In \S\ref{sec:con}, we give a construction which proves Theorem \ref{thm:main3}.  In \S\ref{sec:aut} we prove Theorem \ref{thm:main5} and then give some examples in \S\ref{sec:exam}. Finally, in \S\ref{sec:Q} we raise a questions about whether the Lie complexity function of a morphic word is necessarily uniformly bounded.

\section{Lie complexity}\label{sec:Lie}
Let $\Sigma$ be a finite alphabet and ${\bf w}$ be a right-infinite word over $\Sigma$.  A {\it factor\/} of $\bf w$ is a finite block of contiguous symbols occurring within $\bf w$.  We let ${\rm Fac}({\bf w})$ denote the collection of factors of ${\bf w}$ (including the empty word).  

We say that two words $v,v'$ over $\Sigma^*$ are \emph{cyclically equivalent}, which we write $v\sim_C v'$, if $v$ and $v'$ are cyclic permutations of one another.  We then let $[v]_C$ denote the equivalence class of $v$ under $\sim_C$.   For example, the equivalence class of the English word {\tt tea} is
$\{ {\tt tea, eat, ate} \}$.  We define the \emph{Lie complexity} of ${\bf w}$ to be the function
\begin{equation}
L_{\bf w}(n):=\#\{[v]_C \colon |v|=n
\text{ and } [v]_C\subseteq {\rm Fac}({\bf w})\}.
\end{equation}
That is, $L_{\bf w}(n)$ counts the number of cyclic equivalence classes of length $n$ with the property that {\it every word\/} in the equivalence class is a factor of ${\bf w}$. This can be contrasted with the cyclic complexity function of Cassaigne, Fici, Sciortino, and Zamboni \cite{CFSZ}, defined
as follows:
$$c_{\bf w}(n):=\#\{[v]_C \colon |v|=n
\text{ and } [v]_C \cap {\rm Fac}({\bf w}) \not= \emptyset \},$$
that is, where ``every word'' in our definition of Lie complexity is replaced by ``some word''.
Observe, in particular from our definition that we have the inequality
\begin{equation}
L_{\bf w}(n)\le c_{\bf w}(n)~{\rm for~}n\ge 1.
\end{equation}
Similarly, we have
\begin{equation}
L_{\bf w}(n)\le a_{\bf w}(n)~{\rm for~}n\ge 1,
\end{equation}
where $a_{\bf w}:\mathbb{N}\to \mathbb{N}$ is the \emph{abelian complexity} function, which counts factors of length $n$ up to abelian equivalence, where $v$ and $v'$ are abelian equivalent if $v'$ can be obtained from $v$ via some permutation of the letters \cite{RSZ}.  

We now show the relation between factors $y$ of ${\bf w}$ of unbounded exponent in ${\bf w}$ and the Lie complexity function.  To make this precise, we construct an equivalence relation $\sim$ on the collection of right-infinite words over $\Sigma$ in which two right-infinite words are equivalent if they have the same set of (finite) factors.  We then let ${\rm Per}({\bf w})$ denote the set of $\sim$ equivalence classes of right-infinite words of the form $v^{\omega}$ such that ${\rm Fac}(v^{\omega})\subseteq {\rm Fac}({\bf w})$.  
The following result is the key estimate, which will be used in proving Theorem \ref{thm:main2}.

 \begin{proposition} Let $\Sigma$ be a finite alphabet and let ${\bf w}$ be a right-infinite word over $\Sigma$.  Suppose that there is a positive number $\kappa$ such that for each positive integer $b\ge 1$, there is a positive integer $n=n(b)$ such that $L_{\bf w}(bn)\le \kappa$.  Then $\#{\rm Per}({\bf w})\le \kappa$.  In particular, if $L_{\bf w}(n)$ is uniformly bounded then $\#{\rm Per}({\bf w})$ is finite.
 \label{prop:kappa}
 \end{proposition}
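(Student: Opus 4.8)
The plan is to set up an injection from ${\rm Per}({\bf w})$ into the cyclic equivalence classes counted by the Lie complexity at a single, carefully chosen length $N$, and then to read off the bound directly from the hypothesis.

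First I would choose canonical representatives. Given a class in ${\rm Per}({\bf w})$ represented by $v^{\omega}$, replacing $v$ by its primitive root leaves the infinite word $v^{\omega}$ unchanged, hence leaves its factor set unchanged, so I may assume every representative $v$ is primitive. I would then record the standard facts that for primitive $v$ the word $v^{\omega}$ is purely periodic with least period $d=|v|$, that its length-$d$ factors are exactly the $d$ cyclic shifts of $v$, and that the eventual value of the factor complexity of $v^{\omega}$ equals $d$. Consequently ${\rm Fac}(v^{\omega})$ determines $|v|$ and determines $v$ up to cyclic shift, so the assignment $[v^{\omega}]\mapsto [v]_C$ is a well-defined injection from ${\rm Per}({\bf w})$ into the set of cyclic classes of primitive words $v$ with ${\rm Fac}(v^{\omega})\subseteq {\rm Fac}({\bf w})$.

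Next, to bound the cardinality, I would show that any finite family of $m$ distinct classes of ${\rm Per}({\bf w})$, with primitive representatives $v_1,\dots,v_m$ of lengths $d_1,\dots,d_m$, forces $L_{\bf w}(N)\ge m$ for a suitable $N$. Applying the hypothesis with $b={\rm lcm}(d_1,\dots,d_m)$ gives an $n$ with $L_{\bf w}(bn)\le\kappa$; I set $N=bn$, a common multiple of all the $d_i$. For each $i$ the word $v_i^{N/d_i}$ has length $N$ and primitive root $v_i$, and its cyclic shifts are exactly the length-$N$ factors of $v_i^{\omega}$; since ${\rm Fac}(v_i^{\omega})\subseteq {\rm Fac}({\bf w})$, the entire class $[v_i^{N/d_i}]_C$ lies in ${\rm Fac}({\bf w})$ and so contributes to $L_{\bf w}(N)$. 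To see the $m$ classes are distinct, I would use that a cyclic shift preserves the least period of a word: any cyclic shift of $v_i^{N/d_i}$ has least period $d_i$, so an equality $[v_i^{N/d_i}]_C=[v_j^{N/d_j}]_C$ would force $d_i=d_j$ and, comparing length-$d_i$ prefixes (the primitive roots), would give $v_i\sim_C v_j$, contradicting distinctness. Hence $L_{\bf w}(N)\ge m$, and with $L_{\bf w}(N)\le\kappa$ this yields $m\le\kappa$; since $m$ was arbitrary we conclude $\#{\rm Per}({\bf w})\le\kappa$, the final assertion following by taking $n(b)=1$ for every $b$ when $L_{\bf w}$ is uniformly bounded. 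The main obstacle I anticipate is the bookkeeping around primitivity: verifying that $[v^{\omega}]\mapsto [v]_C$ is genuinely well defined and injective, and that the representatives stay in distinct cyclic classes after passing to the common power $v_i^{N/d_i}$, which is precisely where the standard lemmas on primitive roots and on invariance of the least period under cyclic shift are needed.
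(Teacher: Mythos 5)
Your proof is correct and follows essentially the same route as the paper: both exhibit, at a length $N$ that is a common multiple of all the periods, one fully-realized conjugacy class of length-$N$ powers for each element of ${\rm Per}({\bf w})$, and then invoke the hypothesis at that length to get the bound. The only real difference is the mechanism certifying cyclic inequivalence of those powers --- you reduce to primitive representatives and use uniqueness of primitive roots (which also patches the exponent-one case where your least-period claim is slightly imprecise), whereas the paper instead fixes an exponent $D$ with $u_i^D\notin {\rm Fac}(u_j^{\omega})$ for $i\neq j$.
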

 \begin{proof}
 Suppose that there exist distinct equivalence classes $[u_1^{\omega}],\ldots ,[u_s^{\omega}]$ in ${\rm Per}({\bf w})$ with $s>\kappa$. Pick $D$ such that $u_i^D$ is not a factor of $u_j^{\omega}$ whenever $i\neq j$.  Let $$b:=D|u_1|\cdot |u_2|\cdots |u_s|.$$  Then by construction, for each $n\ge 1$, the words $u_1^{nb/|u_1|},\ldots ,u_s^{nb/|u_s|}$ are cyclically inequivalent words of length $nb$ with the property that every cyclic permutation occurs as a factor of ${\bf w}$.  Hence $L_{\bf w}(bn) \ge s>\kappa$ for every $n\ge 1$, which contradicts the hypothesis that $L_{\bf w}(bn)$ must be at most $\kappa$ for some positive integer $n$. The result follows.  
 \end{proof}
 %Sturmian words have lie complexity zero.
 %Let 
 %$0\mapsto 01$, $1\mapsto 0$.
 %0100101001001
 
 %What is the sequence $2, 2, 1, 1, 

%$[0;d_{1}+1,d_{2},\ldots ,d_{n},\ldots ]$ be the continued fraction expansion of 
%α
%\alpha$
%$s_{{n+1}}=s_{n}^{{d_{n}}}s_{{n-1}}{\text{ for }}n>0$.
%Conjecture, the slope is ...
%$\sum d_i-1 < 1/alpha^n$  
%Then we see that since whenever $d_n>1$, every cyclic permutation of $s_n$ is a factor of $w$ and so $L_w(n)=1$ for infinitely many $n$.  
 \section{Algebraic interpretation of Lie complexity}\label{sec:alg}
  We now give a purely algebraic interpretation of the Lie complexity function, which will be used later in proving Theorem \ref{thm:main1}.  To do this, we introduce the \emph{factor algebra} of a right-infinite word ${\bf w}$.  
  
  Let $\Sigma$ be a finite alphabet and let ${\bf w}$ be a right-infinite word over $\Sigma$.  Given a field $k$, we can construct the \emph{factor} $k$-algebra of ${\bf w}$, which we denote by $A_{\bf w}$.  As a vector space, this is just all finite formal $k$-linear combinations of elements of ${\rm Fac}({\bf w})$; that is,
  $$A_{\bf w} = \left\{ \sum_{v\in {\rm Fac}({\bf w})} \lambda_v v\colon \lambda_v\in k, \lambda_v=0~{\rm for~all~but~finitely~many~}v\in {\rm Fac}({\bf w})\right\},$$
  with multiplication of $v,v'\in {\rm Fac}({\bf w})$ defined by declaring that $v\cdot v'$ is the concatenation of $v$ and $v'$ if $vv'$ is again a factor of ${\bf w}$ and $v\cdot v'$ is zero otherwise.  We can then extend the multiplication to general elements of $A_{\bf w}$ by linearity, and so
  $$\left( \sum_{v\in {\rm Fac}(w)} \alpha_v v\right) \left( \sum_{u\in {\rm Fac}(w)} \beta_u u\right) = \sum_{y\in {\rm Fac}(w)} \sum_{\{(u,v)\colon uv=y\}} \alpha_u \beta_v y.$$
  We now introduce some notation that we will use in obtaining Theorem \ref{thm:main1}.
  
  \begin{notation} We make the following assumptions and introduce the following notation.
  \begin{enumerate}
  \item We let $\Sigma=\{x_1,\ldots ,x_d\}$ be a finite alphabet and we let ${\bf w}$ be a right-infinite word over $\Sigma$.
\item  We let $A_{\bf w}$ be the factor algebra of ${\bf w}$ with base field $k=\mathbb{Q}$.
\item We let $V_n$ denote the subspace of $A_{\bf w}$ spanned by the images of factors of ${\bf w}$ of length $n$.
\item We let $W_n$ denote the subspace of $V_n$ spanned by elements of the form $ab-ba$, where $a,b\in {\rm Fac}({\bf w})$ with $|a|+|b|=n$.
\end{enumerate}
\label{notn}
\end{notation}
Notice that since $V_n$ has a basis given by factors of ${\bf w}$ of length $n$, we have
 \begin{equation}
 p_{\bf w}(n) = {\rm dim}(V_n),
 \label{eq:pwV}
 \end{equation}
 where we are taking the dimension as a $\mathbb{Q}$-vector space.
 
 One important remark is that if we adopt the notation from Notation \ref{notn} and we let $x=x_1+\cdots +x_d\in V_1$, then 
 $x^n$ is the sum of all $n$-fold concatenations of $x_1,\ldots ,x_d$.  Each such concatenation will either be $0$ in the factor algebra or will be equal to a factor of ${\bf w}$ of length $n$; moreover, each factor of ${\bf w}$ of length $n$ can be realized as a unique concatenation of length $n$ of these elements.  Thus, when we work in the factor algebra, we have the formula
 \begin{equation}
     (x_1+\cdots +x_d)^n = \sum_{\{v\in {\rm Fac}(w)\colon |v|=n\}} v.\label{eq:xpower}
     \end{equation}

 \begin{lemma} Adopt the assumptions and notation from Notation \ref{notn}.  Then $$L_{\bf w}(n)={\rm dim}(V_n)-{\rm dim}(W_n).$$
 \label{lem:codim}
  \end{lemma}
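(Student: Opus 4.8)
The plan is to establish the identity $L_{\bf w}(n) = \dim(V_n) - \dim(W_n)$ by showing that the quotient space $V_n/W_n$ has a basis naturally indexed by the cyclic equivalence classes $[v]_C$ that are counted by $L_{\bf w}(n)$. The subspace $W_n$ is spanned by commutators $ab - ba$, and the crucial observation is that quotienting $V_n$ by $W_n$ exactly identifies factors that are cyclic shifts of one another. Indeed, if $v = a b$ is a factor of length $n$ with $a,b$ nonempty, then its cyclic shift $ba$ (when it too is a factor) satisfies $v = ab \equiv ba \pmod{W_n}$, because $ab - ba$ is precisely a generator of $W_n$. So the first step is to make precise how the commutator relations propagate through an entire cyclic equivalence class.

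First I would analyze what happens in the quotient $V_n/W_n$ to the image of a factor $v$ of length $n$. I consider two cases according to the behavior of the cyclic class $[v]_C$. \emph{Case 1:} every cyclic shift of $v$ is again a factor of ${\bf w}$ (so $[v]_C$ is one of the classes counted by $L_{\bf w}(n)$). In this case, writing $v = a_1 a_2$ with $a_1, a_2$ nonempty factors, the element $a_1 a_2 - a_2 a_1$ lies in $W_n$ and both products $a_1 a_2$ and $a_2 a_1$ are nonzero in $A_{\bf w}$; iterating this shows that all $|[v]_C|$ distinct shifts of $v$ become equal in the quotient, so they collapse to a single basis vector. \emph{Case 2:} at least one cyclic shift of $v$ is \emph{not} a factor of ${\bf w}$. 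Here the plan is to show that the image of $v$ is actually zero in $V_n/W_n$: if some rotation $ba$ of $v = ab$ fails to be a factor, then $ba = 0$ in the factor algebra, so the generator $ab - ba = ab$ lies in $W_n$, forcing $v \equiv 0$. This is the key mechanism — commutator relations annihilate exactly the factors whose cyclic class is not fully present.

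With these two cases understood, the counting follows by a dimension argument. The images of the length-$n$ factors span $V_n/W_n$; by Case 2 the images of factors from ``incomplete'' cyclic classes vanish, and by Case 1 the images of factors from each ``complete'' cyclic class coincide and give one nonzero vector per class. So $V_n/W_n$ is spanned by at most $L_{\bf w}(n)$ vectors, giving $\dim(V_n) - \dim(W_n) \le L_{\bf w}(n)$. For the reverse inequality I would verify these vectors are linearly independent — the natural approach is to exhibit a well-defined linear functional on $V_n$ that vanishes on $W_n$ and separates distinct complete cyclic classes, for instance by sending a factor $v$ to the indicator of a chosen class or to a formal sum over $[v]_C$, and checking it kills every commutator $ab - ba$.

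The main obstacle I anticipate is rigorously justifying the independence half, i.e.\ confirming that no \emph{unexpected} linear relations among the complete-class representatives arise in $V_n/W_n$. It is easy to see that each complete class contributes at most one vector and each incomplete class contributes zero, but one must rule out the possibility that two distinct complete classes, or a complete class and some combination of others, are forced to be equal or dependent modulo $W_n$. The cleanest route is to build an explicit complement or a spanning set of functionals: define for each complete cyclic class a linear map $A_{\bf w} \to \mathbb{Q}$ that reads off the total coefficient on that class, check directly that it annihilates every generator $ab - ba$ (using that a commutator either lands within a single cyclic class or involves a non-factor that is zero), and conclude these functionals are linearly independent on $V_n/W_n$. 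This functional-theoretic argument is what turns the heuristic ``commutators collapse cyclic classes'' into the exact dimension count.
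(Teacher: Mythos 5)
Your proposal is correct and rests on exactly the same two mechanisms as the paper's proof: commutators $ab-ba$ in which one rotation fails to be a factor reduce to a single factor and kill the incomplete classes, while the ``sum of coefficients over a cyclic class'' functional annihilates $W_n$ and separates the complete classes (the paper packages this as $T\circ\pi$ inside a proof by contradiction, whereas you construct one such functional per complete class directly, which is a slightly cleaner organization of the same argument). No gaps.
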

  
\begin{proof} We fix $n$ and let $m$ denote the dimension of the quotient space $V_n/W_n$.  Let $u_1+W_n,\ldots ,u_m+W_n$ be a basis for $V_n/W_n$ consisting of $W_n$-cosets of factors $u_1,\ldots ,u_d$ of ${\bf w}$ of length $n$.  Observe that every cyclic permutation of $u_i$ must be a factor of ${\bf w}$, since otherwise we could find words $a$ and $b$ such that $u_i=ab$ and such that $ba$ is not a factor of ${\bf w}$.  But this would give that $ba=0$ and $ab=u_i$ in $A_{\bf w}$ and so we would have $ab-ba=u_i$, which would mean that $u_i\in W_n$, which is a contradiction, since $u_i+W_n$ is part of a basis for $V_n/W_n$.

 Furthermore, the $u_i$ must be cyclically inequivalent, since if there were $i$ and $j$ with $i\neq j$ such that some $u_j$ were a cyclic permutation of $u_i$, we could again write $u_i=ab$ and $u_j=ba$ and we would have $u_i-u_j = ab-ba \in W_n$, which again would contradict the independence of $u_1,\ldots ,u_m$ mod $W_n$.
 
Thus $u_1,\ldots ,u_m$ are cyclically inequivalent words such that $[u_1]_C,\ldots ,[u_m]_C$ are all contained in ${\rm Fac}({\bf w})$ and so $$L_{\bf w}(n)\ge m= {\rm dim}(V_n/W_n).$$  
 
 Now we show that $L_{\bf w}(n)\le  {\rm dim}(V_n/W_n)$. Observe that if $L_{\bf w}(n)$ is strictly greater than
 $ {\rm dim}(V_n/W_n)$, then there must exist some word $u_{m+1}\in {\rm Fac}({\bf w})$ of length $n$ such that every cyclic permutation of $u_{m+1}$ is also a factor of ${\bf w}$ and such that $u_{m+1}$ is not cyclically equivalent to $u_i$ for $i=1,\ldots ,m$.  
 
 Since $u_1+W_n,\ldots  ,u_m+W_n$ form a basis for $V_n/W_n$.  By assumption, there exist rational constants
 $\alpha_1,\ldots ,\alpha_m$ such that ${u}_{m+1} - \sum_{i=1}^m \alpha_i {u}_i \in W_n$.
 Then by definition of $W_n$ there are words $a_1,\ldots ,a_s, b_1,\ldots ,b_s$ and rational constants $\beta_1,\ldots ,\beta_s$ such that
\begin{equation}
u_{m+1} - \sum_{i=1}^m \alpha_i u_i = \sum_{i=1}^s \beta_i (a_ib_i-{b_ia_i})
\label{eq:u}
\end{equation}
in the factor algebra $A_{\bf w}$. We let $U$ denote the subspace of $V_n$ spanned by images of words that are cyclically equivalent to $u_{m+1}$ and we define a linear map $\pi: V_n\to U$.  Since $V_n$ has a basis consisting of factors of ${\bf w}$ of length $n$, it suffices to define $\pi$ on such factors and then extend linearly. For a factor $u$ of ${\bf w}$ of length $n$, we define $\pi(u)=u$ if $u\sim_C u_{m+1}$ and $\pi(u)=0$ otherwise.
  
Then since $u_1,\ldots ,u_m, u_{m+1}$ are pairwise cyclically inequivalent, the left side of Equation (\ref{eq:u}) is sent to $u_{m+1}$ by the map $\pi$; the right side, however, is sent to an element of $W_n$, since for each $i$, either $a_ib_i$ and $b_ia_i$ are both cyclically equivalent to $u_{m+1}$ or neither $a_ib_i$ nor $b_ia_i$ is cyclically equivalent to $u_{m+1}$. It follows that $u_{m+1}\in W_n$.  
 
 Thus $u_{m+1}$ is a $\mathbb{Q}$-linear combination of elements of the form $ab-ba$ with each $ab$ and $ba$ cyclic permutations of $u_{m+1}$.  But by assumption, each cyclic permutation of $u_{m+1}$ is in ${\rm Fac}({\bf w})$ and so if we let $T:U\to \mathbb{Q}$ be the linear map uniquely defined by sending $u$ to $1$ for each cyclic permutation $u$ of $u_{m+1}$, we see that $T\circ \pi$ sends the right side of Equation \ref{eq:u} to zero and 
 $T\circ \pi(u_{m+1})=1$, a contradiction.  Thus we obtain the reverse inequality and so $$L_{\bf w}(n)={\rm dim}(V_n/W_n)={\rm dim}(V_n)-{\rm dim}(W_n).$$  
 \end{proof}
 \begin{remark} Notice that $W_n$ is spanned by commutators, $ab-ba$, and that the algebra $A_{\bf w}$ becomes a Lie algebra when endowed with the bracket $[a,b]:=ab-ba$.  It is this fact and Lemma \ref{lem:codim}, which motivates the name Lie complexity for the function $L_{\bf w}$.
 \end{remark}
 \section{Proof of Theorems \ref{thm:main1} and \ref{thm:main2}}\label{sec:proof}
  We can now use the algebraic framework from the preceding section to prove Theorem \ref{thm:main1}.  Our proof adapts an argument with Lie brackets from \cite{B}.
% \begin{
%  \begin{theorem} Adopt the notation and assumptions from Notation \ref{notn}.  Then 
 %$$L_w(n) \le p_w(n)-p_w(n-1)+1$$ for $n\ge 1$.
 %\label{thm:diff}
%\end{theorem}
\begin{proof}[Proof of Theorem \ref{thm:main1}]
Let $x=\sum_{i=1}^d x_i\in V_1$.  
 We have a linear map 
 $$\Phi_n : V_n \to W_{n+1}$$ defined by
$$u \mapsto ux-xu = \sum_{i=1}^d ux_i-x_i u$$ for $u\in V_n$.  Then by construction, $\Phi_n$ sends a factor of ${\bf w}$ of length $n$ into the space $W_{n+1}$, and so $\Phi_n$ does indeed map into $W_{n+1}$.    

Recall from Equation (\ref{eq:xpower}) that 
$$x^n=\sum_{\{u \in {\rm Fac}({\bf w})\colon |u|=n\}} {u}.$$
We claim that the kernel of $\Phi_n$ is spanned by $x^n$.
To see this, observe that $\Phi_n(x^n) = x^n\cdot x - x\cdot x^n=0$ and so $x^n$ is in the kernel of $\Phi_n$.

Suppose that there is $$z:=\sum_{\{u \in {\rm Fac}({\bf w})\colon |u|=n\}} \alpha_u u\in {\rm ker}(\Phi_n)$$ with $z$ not in the span of $x^n$. Then we can replace $z$ by $z-\alpha x^n$ for some $\alpha$ in $\mathbb{Q}$ and assume that there is some factor $u$ of ${\bf w}$ of length $n$ with $\alpha_u=0$ but that $z$ is nonzero.  Since $z$ is nonzero, there is some factor $v$ of ${\bf w}$ of length $n$ such that $\alpha_v\neq 0$. 

Then since $v$ and $u$ are both factors of $w$ there is some factor $y_{u,v}$ of ${\bf w}$ that either has $v$ as a prefix and $u$ as a suffix or has $u$ as a suffix and has $v$ as a prefix.  Among all factors $y$ of ${\bf w}$ having the property that $u$ is either a prefix or suffix and having the property that some word $v'$ of length $n$ with $\alpha_{v'}\neq 0$ is either a prefix or a suffix, we pick one, $y_0$, of shortest length possible.  

By symmetry, it suffices to consider the case when $u$ is a prefix of $y_0$, and we let $v'$, with $\alpha_{v'}\neq 0$, denote the suffix of $y_0$ of length $n$. Then 
\begin{equation}
    \label{eq:y0}
y_0=ua=bv'
\end{equation}
for some words $a$ and $b$.  Since $\alpha_u=0\neq \alpha_{v'}$ we see that $|a|=|b|\ge 1$.  

Let $j$ be such that $x_j$ is the last letter of $b$ and write $b=b' x_j$.  
By assumption $\Phi_n(z)=0$ and so
\begin{equation}
    \label{eq:Phiz}
\sum_{\{s \in {\rm Fac}({\bf w})\colon |s|=n\}} \sum_{i=1}^d  \alpha_s(sx_i-{x_i s}) = 0.
\end{equation}
We now consider the coefficient of ${x_jv'}$ in both sides of Equation (\ref{eq:Phiz}).  The coefficient in the right side is equal to zero.  On the other hand, we have 
\begin{equation}
    \label{eq:jkv}
x_j v' = v'' x_k
\end{equation}
for some $k\in \{1,\ldots ,d\}$ and some word $v''$ of length $n$, and so the coefficient of ${x_jv'}$ in the left side of Equation (\ref{eq:Phiz}) is 
$-\alpha_{v'}+ \alpha_{v''}$, since the only contribution from the terms
$\sum_{i=1}^d \alpha_s sx_i$ occurs when $i=k$ and $s=v''$ and the only contribution from the terms $\sum_{i=1}^d -\alpha_s x_i s$ comes when $i=j$ and $s=v'$.

Hence $0=-\alpha_{v'}+\alpha_{v''}$ and so in particular $\alpha_{v''}$ is nonzero.  Then from Equations (\ref{eq:y0}) and (\ref{eq:jkv}) and the fact that $b=b'x_j$, we see
$$ua=bv'=b'x_jv' = b'v'' x_k.$$  Thus $x_k$ is the last letter of $a$ and so $a=a' x_k$ for some word $a'$ with $|a'|<|a|$.  

But now
$$y_0':=ua' = b' v''$$ has the property that $u$ is a prefix, $v''$ is a suffix and $\alpha_{v''}\neq 0$ and $|y_0'|<|y_0|$, which contradicts the minimality of $|y_0|$.  It follows that the kernel of $\Phi_n$ is spanned by $x^n$, and since $x^n$ is nonzero in the factor algebra, the kernel of $\Phi_n$ is exactly one-dimensional.

Then the rank-plus-nullity theorem for linear maps gives that 
 $${\rm dim}(V_n) = {\rm dim}({\rm ker}(\Phi_n))+{\rm dim}({\rm Im}(\Phi_n)).$$
Since we have now established that ${\rm dim}({\rm ker}(\Phi_n))=1$ and since the image of $\Phi_n$ is a subspace of $W_{n+1}$, we have in fact that
 $${\rm dim}(V_n) \le {\rm dim}(W_{n+1}) + 1,$$ or, equivalently,
 $${\rm dim}(W_{n+1})\ge {\rm dim}(V_n)-1.$$
 
Consequently, Lemma \ref{lem:codim} gives
 $$L_{\bf w}(n+1) = {\rm dim}(V_{n+1})-{\rm dim}(W_{n+1}) \le {\rm dim}(V_{n+1}) - ({\rm dim}(V_n)-1).$$ Using Equation (\ref{eq:pwV}), we then see that
 $$L_{\bf w}(n+1)\le p_{\bf w}(n+1)-p_{\bf w}(n) +1,$$ and so we obtain the desired inequality.  When ${\bf w}$ has linear factor complexity, a deep result of Cassaigne \cite{Cass} shows that $p_{\bf w}(n+1)-p_{\bf w}(n)$ is uniformly bounded above by a constant, which then gives that $L_{\bf w}(n)$ is similarly bounded, and so the proof is complete.
 \end{proof}
 
We now get the proof of Theorem \ref{thm:main2}.

 \begin{proof}[Proof of Theorem \ref{thm:main2}] By Theorem \ref{thm:main1}, $L_{\bf w}(n)$ is uniformly bounded when ${\bf w}$ has linear factor complexity. Proposition \ref{prop:kappa} then gives that $\#{\rm Per}({\bf w})$ is finite.  Since there are only finitely many primitive words $y'$ such that $(y')^{\omega}$ has the same set of factors of a fixed periodic right-infinite word, we then obtain the desired result.
  \end{proof}

  \section{Construction}\label{sec:con}
  In this section, we give a construction that proves Theorem \ref{thm:main3}.  We note that we make use of a similar construction given by the first author and Smoktunowicz \cite{BS} in the context of monomial algebras, which we sharpen slightly.
  
  \begin{proof}
  Let $f:\mathbb{N}\to \mathbb{N}$ be a function that tends to infinity.  We shall construct a recurrent binary word ${\bf w}$ whose factor complexity function is bounded above by $n f(n)$ for $n$ sufficiently large such that ${\rm Per}({\bf w})$ is infinite.

  First observe that by replacing $f(n)$ by $\min(f(j)\colon j\ge n)$, we may assume that $f$ is weakly increasing.  Then for each $j$ there is some largest natural number $m_j$ such that $f(m_j)\le 19j^2$.  Then since $f(n)$ is weakly increasing and tends to infinity, we see that the $m_j$ are weakly increasing and tend to infinity. 
  
  To begin, we let ${\bf f}$ be the Fibonacci word, which is a Sturmian word, and hence has complexity function $p_{\bf f}(n)=n+1$.  We fix a prefix $u_1$ of ${\bf f}$ of length $2^{m_1}$.  Since ${\bf f}$ is uniformly recurrent, there are infinitely many occurrences of $u_1$ in ${\bf f}$, and hence we can find a prefix $u_2$ of ${\bf f}$ of length at least $2^{m_1+m_2}$ such that $u_1$ is a suffix of $u_2$. We let $d_2$ denote the length of $u_2$.  In general, for each $i$ there is a prefix $u_i$ of ${\bf f}$ such that $u_{i-1}$ is a suffix and such that $|u_i|>2^{m_i} |u_{i-1}|$, and we let $d_i$ denote the length of $u_i$.   Then $d_i$ is at least $2^{m_i+\cdots +m_1}$.  We define $a_{i,j}= \lceil |u_i|/|u_j|\rceil$ for $i,j\ge 1$ and define
  \begin{equation}
  v_n =u_nu_{n-1}^{a_{n,n-1}}\cdots u_2^{a_{2,1}}u_1^{a_{n,1}} u_2^{a_{n,2}}\cdots u_{n-1}^{a_{n,n-1}}u_n\end{equation} and we define
  a sequence of words $s_n$ with $s_1=u_1$ and for $n\ge 2$ we take
  \begin{equation}
  s_n=s_{n-1}v_n s_{n-1}v_n.\end{equation}
  Since each $s_i$ is a prefix of $s_{i+1}$, we can define the right-infinite word 
  \begin{equation}
  {\bf w}=\lim_n s_n,
  \end{equation}
  and since every factor of ${\bf w}$ appears in some prefix $s_i$ and since $s_{i+1}=s_i v_{i+1} s_i v_{i+1}$, we see that ${\bf w}$ is recurrent.
  
This construction first appears in work of the first author and Smoktunowicz \cite[\S4]{BS} (but in that paper the authors used $W$ for ${\bf f}$, $W_i$ for the prefixes $u_i$, $V_n$ for the factors $v_n$, $U_n$ for the factors $s_n$, and $U$ for the word ${\bf w}$).
 
   Let $n$ be a natural number that is larger than $|u_1|$. Then there is a unique $d$ such that $|u_d|\le n<|u_{d+1}|$. 
   Since $u_d \ge 2^{m_d+\cdots +m_1}$, we see that $n\ge 2^{m_d}$.
  
  Then we may write ${\bf w}=(s_d v_{d+1} s_d v_{d+1})v_{d+2}(s_d v_{d+1} s_d v_{d+1})v_{d+2}\cdots$ and since $|v_j|>2n$ for $j>d$, a factor of ${\bf w}$ of length $n$ is either:
  \begin{enumerate}
 \item a factor of some word of the form $v_js_dv_k$ with $j,k>d$; or
 \item a factor of length $n$ of some 
  $v_i v_{i+1}$ with $i\ge d+1$ that overlaps with both a suffix of $v_i$ and a prefix of $v_{i+1}$.  
  \end{enumerate}
  Then since $u_{d+1}$ is both a prefix and suffix of $v_i$ for $i\ge d+1$ and since $|u_{d+1}|>n$, we see that every factor of $v_js_dv_k$, with $j,k>d$,
  of length $n$ is either a factor of $u_{d+1} s_d u_{d+1}$ or a factor of $v_j$ for some $j$.  Similarly, a factor of $v_i v_{i+1}$ with $i\ge d+1$ that overlaps with both a suffix of $v_i$ and a prefix of $v_{i+1}$ must be a factor of $u_{d+1}^2$ that overlaps with both copies of $u_{d+1}$.
 
 We now consider these three types of factors in a case-by-case basis.  A factor of $u_{d+1} s_d u_{d+1}$ of length $n$ is either a factor of $u_{d+1}$, or it must overlap with $s_d$. Since $u_{d+1}$ is a factor of a Sturmian word, there are at most $n+1$ distinct factors of $u_{d+1}$ of length $n$; there are at most $n-1+|s_d|$ ways of choosing a factor of $u_{d+1} s_d u_{d+1}$ of length $n$ that overlaps with $s_d$.  Thus we see that there are at most $2n+|s_d|$ factors of $u_{d+1}s_du_{d+1}$ of length $n$.
 
 There are $n-1$ ways of selecting a factor of $u_{d+1}^2$ of length $n$ that overlaps with both copies of $u_{d+1}$.  Thus we see that factors of $u_{d+1}^2$ that overlap with both copies of $u_{d+1}$ contribute at most $n-1$ additional factors to our count. 
 
 Finally, there are at most $12 d^2n$ factors of some $v_j$ \cite[Lemma 4.4]{BS} and so we see that the total number of factors of ${\bf w}$ of length $n$ is at most
  $$3n-1+|s_d|+12 d^2n.$$
  
 Now \cite[Equation (4.8)]{BS} gives that $|s_d|\le 4d^2|u_d| \le 4d^2n$ and so the number of factors of ${\bf w}$ of length $n$ is at most
  $3n-1+4d^2n + 12 d^2 n \le 19n d^2$.  We have $n\ge 2^{m_d}$ and since $f(j)>19d^2$ for $j>m_d$, we see that $19n d^2 \le n f(n)$, and so $p_{\bf w}(n)\le n f(n)$ for $n\ge |u_1|$, which gives the desired bound on the factor complexity of ${\bf w}$.
  
  Finally, observe that for a fixed $i$, the word $u_i^{a_{n,i}}$ appears as a factor of $v_{n}$ and hence as a factor of ${\bf w}$.  Since $a_{n,i}\ge |u_n|/|u_i| \to \infty$, we see that arbitrarily large powers of $u_i$ appear as factors of ${\bf w}$.   Now for each $i$, there is some primitive word $y_i$ such that $u_i=y_i^{e_i}$ for some $e_i\ge 1$.  Since $u_i$ is a factor of the Fibonacci word and since the Fibonacci word is $4$th-power free \cite{MR}, we see that $e_i\in \{1,2,3\}$ for every $i$.  Hence $|y_i|\to \infty$ as $i\to\infty$, and so we have infinitely many primitive words $y$ such that $y^n$ is a factor of ${\bf w}$ for every $n$.
\end{proof}
  
\section{Automatic sequences}\label{sec:aut}

A sequence ${\bf s} = (s_n)_{n \geq 0}$ is {\it $k$-automatic\/} if
there exists a finite automaton
that, on input the base-$k$ representation of $n$, computes $s_n$ (by arriving at a state whose output is $s_n$).
We have the following result \cite{Charlier&Rampersad&Shallit:2012}:
\begin{theorem}
\label{thm1}
Let $\bf s$ be a $k$-automatic sequence.  
\begin{itemize}
\item[(a)] There is an algorithm that,
given a well-formed first-order logical formula $\varphi$ in
$\theory(\N, +, 0, 1, n \rightarrow {\bf s}[n])$
having no free variables, decides if $\varphi$ is true or false.
\item[(b)]
Furthermore, if $\varphi$ has free variables, then the algorithm
constructs an automaton recognizing the representation of the values
of those variables for which $\varphi$ evaluates to true.
\end{itemize}
\end{theorem}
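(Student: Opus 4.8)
The plan is to prove both parts simultaneously by \emph{structural induction} on the formula $\varphi$, establishing the stronger statement (b) for formulas with free variables and then reading off (a) as the special case of a sentence. This is the Büchi--Bruyère-style correspondence between first-order definability over $(\N,+)$ enriched with a base-$k$ predicate and recognizability by finite automata; the only genuinely new ingredient here is the sequence predicate $n\mapsto {\bf s}[n]$. Throughout I would fix the convention that an $m$-tuple $(n_1,\ldots,n_m)\in\N^m$ is encoded by a word over the alphabet $\Sigma_k^m$ (with $\Sigma_k=\{0,1,\ldots,k-1\}$) obtained by writing each $n_i$ in base $k$, padding the shorter strings with leading zeros to a common length, and reading the $m$ digit-strings \emph{in parallel}. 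The induction hypothesis to maintain is: for every formula $\psi(x_1,\ldots,x_m)$ there is an effectively constructible finite automaton $\aA_\psi$ over $\Sigma_k^m$ accepting exactly the encodings of the tuples satisfying $\psi$.

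For the base cases I would check that each atomic relation is recognizable. The graph of addition $\{(x,y,z):x+y=z\}$ is accepted by a small automaton tracking the single carry as it scans the digits, while equality and the constants $0,1$ are trivially recognizable. The crucial new atom is ${\bf s}[n]=a$ for a fixed letter $a$ of the output alphabet: since ${\bf s}$ is $k$-automatic, \emph{by definition} there is a DFA that reads the base-$k$ representation of $n$ and halts in a state labelled ${\bf s}[n]$, so selecting as accepting states exactly those labelled $a$ yields an automaton for $\{n:{\bf s}[n]=a\}$. Here I would settle two bookkeeping points that recur throughout: the reading direction (most- versus least-significant digit first) and the presence of leading zeros. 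Both are harmless, since the class of $k$-automatic sets is closed under reversal of the input and is insensitive to leading zeros, so I may normalize every automaton to a common convention.

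The inductive steps use only effective closure properties of regular languages. Conjunction, disjunction, and negation correspond to intersection, union, and complement of the languages $L(\aA_\psi)$, handled by the product construction and by complementation after determinization (the subset construction). The quantifier $\exists x_i\,\psi$ corresponds to \emph{projection}: erase the $i$-th track from $\Sigma_k^m$ to obtain a nondeterministic automaton over $\Sigma_k^{m-1}$, then redeterminize; and $\forall x_i\,\psi$ reduces via $\neg\exists x_i\,\neg\psi$. I expect this projection step to be the main obstacle -- not conceptually, but in the encoding discipline it forces. Because integers have representations of different lengths, projecting out a coordinate can strand padding symbols and conflate a value with a shifted copy of itself, so correctness demands restricting at each stage to a recognizable language of \emph{normalized} encodings. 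The clean way to manage this is to build, once and for all, a normalization automaton accepting the legal encodings and to intersect with it after every projection; verifying that this normalization is itself regular and interacts correctly with projection is the delicate part of the argument.

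Finally I would derive (a) from (b). A sentence $\varphi$ has no free variables, so $m=0$ and the constructed automaton $\aA_\varphi$ recognizes a subset of the one-element set of empty assignments; it accepts exactly when $\varphi$ is true. The decision procedure is therefore: run the construction of (b) to obtain $\aA_\varphi$, and then decide truth by a reachability test checking whether $\aA_\varphi$ accepts. Since every step above is constructive, one obtains at once both the decision algorithm of (a) and the automaton-building algorithm of (b).
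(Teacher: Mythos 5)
The paper does not prove this statement at all: it is quoted as a known result, with a citation to Charlier, Rampersad, and Shallit (2012), and the argument there is exactly the Büchi--Bruyère-style induction you give (regular base relations including the automaton for $\{n : {\bf s}[n]=a\}$, effective closure under Boolean operations and projection with padding normalization, and decidability of sentences via an acceptance test). Your proposal is correct and takes essentially the same approach as the cited source; the only cosmetic omission is that atoms of the form ${\bf s}[x]={\bf s}[y]$ should be reduced to the finite disjunction $\bigvee_{a}({\bf s}[x]=a \wedge {\bf s}[y]=a)$ over the finite output alphabet, which your base case immediately handles.
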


A sequence $(a_n)_{n \geq 0}$ taking values in $\Z$ is {\it $k$-regular\/} if
there is a linear representation for it, that is, a row vector $v$, a column vector $w$, and a matrix-valued morphism $\zeta:\{0,1,\ldots, k-1\} \rightarrow \Z^{d\times d}$ such that $a_n = v \cdot \zeta(x) \cdot w$,
where $x$ is the base-$k$ representation of $n$.
If $A$ is an automaton accepting the base-$k$ representation of pairs $(i,n)$ in parallel, then the sequence
$a_n = \# \{ i: A \text{ accepts } (i,n) \}  $ is $k$-regular, and furthermore the matrices $\zeta(a)$ in the linear representation for $(a_n)$ have
non-negative integer entries \cite{Charlier&Rampersad&Shallit:2012}.
 A $k$-regular sequence taking only finitely many values is $k$-automatic \cite[Thm.~16.1.5]{Allouche&Shallit:2003},
 and the automaton can be algorithmically produced from the linear representation because the entries of $\zeta(a)$ are in
 $\N$.

In this section we prove Theorem~\ref{thm:main5}: if $\bf w$ is a $k$-automatic sequence, then the sequence $(L_{\bf w} (n))_{n \geq 0}$ is also $k$-automatic.

\begin{proof}
We will show that the sequence $(L_{\bf w} (n))_{n \geq 0}$ is $k$-regular.  Since automatic sequences have linear factor complexity \cite[Thm.~10.3.1]{Allouche&Shallit:2003}, it follows from Theorem~\ref{thm:main2}
that $(L_{\bf w} (n))_{n \geq 0}$ is bounded, and hence  automatic.

We construct a linear representation for $(L_{\bf w} (n))_{n \geq 0}$ by constructing a first-order logical formula $\lie(i,n)$ for
the pairs $(i,n)$ such that
\begin{itemize}
    \item[(a)] All of the cyclic shifts of 
${\bf w}[i..i+n-1]$  appear in $\bf w$;
\item[(b)] ${\bf w}[i..i+n-1]$ is the lexicographically least of all its cyclic shifts appearing in $\bf w$; and
\item[(c)] ${\bf w}[i..i+n-1]$ is the first
occurrence of this particular factor.
\end{itemize}
Then the number of $i$ making $\lie(i,n)$ true equals
$L_{\bf w} (n)$.

We do this in a number of steps:
\begin{itemize}

\item $\factoreq(i,j,n)$ asserts that 
the length-$n$ factor ${\bf w}[i..i+n-1]$ equals 
${\bf w}[j..j+n-1]$;

\item $\shift(i,j,n,t)$ asserts that
${\bf w}[i..i+n-1]$ is the shift, by $t$ positions, of the factor
${\bf w}[j..j+n-1]$.

\item $\conj(i,j,n)$ asserts that the factor
${\bf w}[i..i+n-1]$ is a cyclic shift of
${\bf w}[j..j+n-1]$.

\item $\lessthan(i,j,n)$ asserts that the
factor ${\bf w}[i..i+n-1]$ is lexicographically smaller than
${\bf w}[j..j+n-1]$.

\item $\lessthaneq(i,j,n)$ asserts that the
factor ${\bf w}[i..i+n-1]$ is
lexicographically $\leq$ the factor
${\bf w}[j..j+n-1]$.

\item $\allconj(i,n)$ asserts that all cyclic shifts ${\bf w}[i..i+n-1]$ appear
as factors of ${\bf w}$.

\item $\lexleast(i,n)$ asserts that ${\bf w}[i..i+n-1]$ is 
lexicographically least among all its cyclic shifts that actually appear in $\bf w$.

\item $\lie(i,n)$ asserts that all cyclic shifts of ${\bf w}[i..i+n-1]$
appear in $\bf w$, that ${\bf w}[i..i+n-1]$ is the
lexicographically least cyclic shift, and that
${\bf w}[i..i+n-1]$ is its first occurrence in $\bf w$.

\end{itemize}

Here are the definitions of the formulas.   Recall that the domain of all variables is $\N = \{0,1,\ldots \}$.
\begin{align*}
\factoreq(i,j,n) &:= \forall u, v \, (i+v=j+u \, \wedge \, u\geq i \, \wedge\, u<i+n) \implies {\bf w}[u] = {\bf w}[v] \\
\shift(i,j,n,t) &:= \factoreq(j, i+t,n-t) \, \wedge \, \factoreq(i, (j+n)-t, t) \\
\conj(i,j,n) & := \exists t\, (t\leq n) \, \wedge\, \shift(i,j,n,t) \\
\lessthan(i,j,n) & := \exists t \, (t<n) \, \wedge\, \factoreq(i,j,t) \, \wedge\, {\bf w}[i+t]<{\bf w}[j+t] \\
\lessthaneq(i,j,n) &:= \lessthan(i,j,n) \, \vee \, \factoreq(i,j,n) \\
\allconj(i,n) & := \forall t \, (t\leq n) \implies \exists j\, \shift(i,j,n,t) \\
\lexleast(i,n) & := \forall j \, \conj(i,j,n) \implies \lessthaneq(i,j,n) \\
\lie(i,n) & := \allconj(i,n) \, \wedge\, \lexleast(i,n) \, \wedge\, (\forall j \, \factoreq(i,j,n) \implies (j \geq i)) 
\end{align*}
From the remarks preceding the proof, we are now done.
\end{proof}

\begin{remark}
Most of the logical formulas should be self-explanatory, with one exception:  in order to specify 
$\allconj$, why do we use shifts of length $0, 1, \ldots, n$?  It is because we want the formula to work even in the case of the empty word.
\end{remark}

\begin{corollary}
Given an automatic sequence $\bf w$, the quantity
$\sup_{n \geq 0} L_{\bf w} (n) $ is computable.
\label{cor5}
\end{corollary}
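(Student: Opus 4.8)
The plan is to reduce the computation of $\sup_{n\geq 0} L_{\bf w}(n)$ to the inspection of an explicitly constructed finite object, exploiting the two facts we already have in hand: by Theorem \ref{thm:main5} the sequence $(L_{\bf w}(n))_{n\geq 0}$ is $k$-automatic, and by Theorems \ref{thm:main1} and \ref{thm:main2} (together with the fact that automatic sequences have linear factor complexity) this sequence is uniformly bounded, so its supremum is a finite natural number rather than $+\infty$.

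First I would note that the proof of Theorem \ref{thm:main5} is effective, not merely existential: it produces a linear representation for $(L_{\bf w}(n))_{n\geq 0}$ whose matrices $\zeta(a)$ have nonnegative integer entries, and, as recalled in the opening of this section, a $k$-regular sequence taking only finitely many values and given by such a representation can be converted algorithmically into a finite automaton. Thus I may assume I have constructed an explicit automaton $M$ which, on input the base-$k$ representation of $n$, outputs $L_{\bf w}(n)$.

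Given $M$, the argument is then immediate. The output alphabet of $M$ is a finite subset of $\N$, and $\sup_{n\geq 0} L_{\bf w}(n)$ is precisely the largest output attached to a state of $M$ reachable by reading the base-$k$ representation of some $n\in\N$. Since $M$ is finite, determining its reachable states and reading off the maximal associated output is a routine finite computation, so the supremum is computable. Equivalently, and perhaps more transparently, I would argue via decidability of the enriched first-order theory: since $(L_{\bf w}(n))_{n\geq 0}$ is $k$-automatic, Theorem \ref{thm1}(a) applies to $\theory(\N,+,0,1,n \rightarrow L_{\bf w}[n])$, so for each constant $c$ the sentence $\forall n\,(L_{\bf w}(n)\leq c)$ is decidable; I would test this for $c=0,1,2,\ldots$ and return the first $c$ for which it holds, which is exactly the desired supremum. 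The search is guaranteed to terminate because Theorem \ref{thm:main1} ensures the supremum is finite, even without an a priori numerical bound.

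The statement is essentially a corollary of effectiveness rather than of any new combinatorial content, so there is no serious mathematical obstacle. The one point requiring care—the nearest thing to an obstacle—is precisely this effectiveness: one must make sure that the automaton (or the decision procedure for the enriched structure) is genuinely produced, and not merely shown to exist. This is supplied by the effective passage from a nonnegative integer linear representation to a finite automaton invoked in Theorem \ref{thm:main5}, which is what licenses both formulations above.
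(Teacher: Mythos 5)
Your proof is correct and follows essentially the same (implicit) route as the paper: the proof of Theorem \ref{thm:main5} effectively constructs a linear representation with nonnegative integer matrices, hence an explicit automaton computing $L_{\bf w}(n)$, and the supremum is read off as the maximum output over reachable states. The alternative formulation via decidability of $\forall n\,(L_{\bf w}(n)\le c)$ is also valid; your only slip is attributing boundedness to Theorem \ref{thm:main2} alongside Theorem \ref{thm:main1} --- it is Theorem \ref{thm:main1} (with linear factor complexity of automatic sequences) that supplies it, though the paper itself makes the same miscitation.
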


\begin{remark}
Theorem~\ref{thm:main5} and Corollary~\ref{cor5} also hold for automata based on other kinds of numeration systems, such as
Fibonacci numeration \cite{Frougny:1986}; Tribonacci numeration
\cite{Mousavi&Shallit:2015}; and Ostrowski numeration systems
\cite{Baranwal:2020}.
\end{remark}

\section{Examples}\label{sec:exam}

Using the free software {\tt Walnut} \cite{Mousavi:2016}, we can implement the algorithm of the previous section to find automata and closed-form expressions for
$L_{\bf w} (n)$ for some classical words of interest.

\begin{example} Let $\bf t$ be the Thue-Morse word, the fixed point of the morphism $\mu$ sending $0$ to $01$ and $1$ to $10$.
Then 
$$ L_{\bf t} (n) = \begin{cases}
1, & \text{if $n = 0$ or $n = 2^k$ for $k \geq 3$}; \\
2, & \text{if $n = 1,4$ or $n = 3 \cdot 2^k$ for $k \geq 0$}; \\
3, & \text{if $n = 2$}; \\
0, & \text{otherwise.}
\end{cases} $$
\end{example}
To some extent this is not surprising, since we know that the only squares in $\bf t$ are of length $2^k$ or $3 \cdot 2^k$.  However, $L_{\bf w} (n)$ can be nonzero even if a sequence has no squares, as the following example shows.
\begin{example} Let $\bf vtm$ be the variant of the Thue-Morse word defined over a ternary alphabet,
the fixed point of the morphism sending $2$ to $210$, $1$ to $20$, and $0$ to $1$.   It is well-known that $\bf vtm$ is squarefree \cite{Berstel:1978}.
Then 
$$ L_{\bf vtm} (n) = \begin{cases}
1, & \text{if $n = 0$ or $n = 2^k$ for $k \geq 2$}; \\
2, & \text{if $n = 3 \cdot 2^k$ for $k \geq 0$}; \\
3, & \text{if $n = 1,2$}; \\
0, & \text{otherwise.}
\end{cases} $$
\end{example}

\begin{example}  Let us look at an example in a different base, and where there are factors of unbounded exponent.
Let ${\bf c} = 101000101 \cdots$ be the {\it Cantor sequence}, which is the fixed point of the morphism $1 \rightarrow 101$
and $0 \rightarrow 000$.  Then
$$ L_{\bf c} (n) = \begin{cases}
3, & \text{if $n = 4$}; \\
2, & \text{if $n = 0,1,3$ or $2 \cdot 3^k$ for $k \geq 0$}; \\
1, & \text{otherwise.}
\end{cases} $$
\end{example}

\begin{example} Let $\bf f$ be the Fibonacci word,
the fixed point of the morphism sending $0$ to $01$ and $1$ to $0$.  Define the Fibonacci numbers by
$F_0 = 0$, $F_1 = 1$, and $F_n = F_{n-1} + F_{n-2}$ for $n \geq 2$.
Then 
$$ L_{\bf f} (n) = \begin{cases}
1, & \text{if $n = 0$ or $n = F_k$ for $k \geq 4$ or $n = F_k + F_{k-3}$ for $k \geq 4$ }; \\
2, & \text{if $n = 1,2$}; \\
0, & \text{otherwise.}
\end{cases} $$
\end{example}

\begin{example} Let $\bf TR$ be the Tribonacci word,
the fixed point of the morphism sending $0$ to $01$,
$1$ to $02$, and $2$ to $0$.  Define the Tribonacci numbers by
$T_0 = 0$, $T_1 = 1$, $T_2 = 1$, and $T_n = T_{n-1} + T_{n-2} + T_{n-3}$ for $n \geq 3$.
Then 
$$ L_{\bf TR} (n) = \begin{cases}
1, & \text{if $n = 0$ or $n = T_k$ for $k \geq 5$ or $n = T_k + T_{k-1}$ for $k \geq 3$ or} \\
& \text{$n = T_k + T_{k-4}$ for $k \geq 5$ }; \\
2, & \text{if $n = 4$}; \\
3, & \text{if $n = 1,2$}; \\
0, & \text{otherwise.}
\end{cases} $$
\end{example}

Finally, we give an example where $L_{\bf w} (n) = 0$
for $n \geq 2$:

\begin{example}
Let $\Sigma=\{x_1,\ldots ,x_6,y_1,\ldots ,y_6\}$
and let $\Phi:\Sigma^*\to \Sigma^*$ be the morphism given by
\[\begin{array}{lll} x_1\mapsto  x_1x_2y_1y_2 &  x_2\mapsto x_1x_3y_1y_3 & x_3\mapsto x_1x_4y_1y_4 \\
 x_4\mapsto x_1x_5y_1y_5 & x_5\mapsto x_1x_6y_1y_6 & x_6\mapsto x_2x_3y_2y_3 \\
 y_1\mapsto x_2x_4y_2y_5 & y_2\mapsto x_2x_5y_3y_4 & y_3\mapsto x_2x_6y_2y_6\\
  y_4\mapsto x_3x_4y_3y_5  &y_5\mapsto x_3x_5y_3y_6 & y_6\mapsto x_3x_6y_4y_5.
  \end{array}\]  and let $ {\bf w}=\Phi^{\omega}(x_1)$.  Then $\bf w$ is $2$-automatic and Lemma 6.1 of \cite{BM} shows that $L_w(n)=0$ for $n\ge 2$.
  \end{example}
The preceding example gives a word in which every factor of length at least two has some cyclic conjugate that is not a factor.  Badkobeh and Ochem \cite{Bad} give an example of such a word over a $5$-letter alphabet. In general, the property that $L_{\bf w}(n)=0$ for $n\ge i$ has been studied over various alphabets \cite{Gam}.
  
  \section{Concluding Remarks}
  \label{sec:Q}
In this final section, we pose a question that is suggested by the computations we've performed.
\begin{question} Can the Lie complexity function of a morphic word be unbounded?
\label{Q1}
\end{question}
We note that the analogue of Theorem \ref{thm:main2} is known to hold for pure morphic words \cite[Corollary 20]{KS}, and so if Question \ref{Q1} has an affirmative answer, this would give an extension of this result to general morphic words.

Although many classes of morphic words, including primitive morphic and $k$-uniform morphic words, have linear complexity and hence are covered by Theorem \ref{thm:main2}, the factor complexity function of a morphic word need not be linear in general.  Pansiot (see \cite[Theorem 4.7.1]{BR}) has shown that the factor complexity of a pure morphic word is either ${\rm O}(1)$, $\Theta(n)$, $\Theta(n \log \log\, n)$, $\Theta(n \log\, n)$, or $\Theta(n^2)$, and that each of these possibilities can be realized as the factor complexity of a pure morphic word.

\end{document}